\documentclass[11pt]{article}
\usepackage{fullpage}
\usepackage{amsfonts, amsthm, amssymb, amsmath,mathrsfs, array}
\usepackage[lined,boxed,commentsnumbered]{algorithm2e}
\usepackage[english]{babel}

\usepackage[pdftex]{graphicx,color}
\usepackage[pdftex, pagebackref=true, colorlinks=true, urlcolor=blue, citecolor=blue, linkcolor=blue]{hyperref}

\usepackage{framed}
\usepackage{verbatim}
\usepackage{fullpage}
\usepackage{epsfig}


\newtheorem{thm}{Theorem}[section]
\newtheorem{lem}[thm]{Lemma}
\newtheorem{cor}[thm]{Corollary}
\newtheorem{defn}[thm]{Definition}
\newtheorem{clm}[thm]{Claim}
\newtheorem{cons}[thm]{Construction}
\newtheorem{prop}[thm]{Proposition}

\newenvironment{theorem}{\begin{thm}\begin{rm}}%
{\end{rm}\end{thm}}
\newenvironment{lemma}{\begin{lem}\begin{rm}}%
{\end{rm}\end{lem}}
\newenvironment{corollary}{\begin{cor}\begin{rm}}%
{\end{rm}\end{cor}}
\newenvironment{definition}{\begin{defn}\begin{em}}%
{\end{em}\end{defn}}
\newenvironment{claim}{\begin{clm}\begin{rm}}%
{\end{rm}\end{clm}}
{\end{em}\end{cons}}
{\end{em}\end{prop}}

\newcommand{\secref}[1]{\hyperref[#1]{Section \ref{#1}}}
\newcommand{\apref}[1]{\hyperref[#1]{Appendix \ref{#1}}}
\newcommand{\thref}[1]{\hyperref[#1]{Theorem \ref{#1}}}
\newcommand{\defref}[1]{\hyperref[#1]{Definition \ref{#1}}}
\newcommand{\corref}[1]{\hyperref[#1]{Corollary \ref{#1}}}
\newcommand{\lemref}[1]{\hyperref[#1]{Lemma \ref{#1}}}
\newcommand{\clref}[1]{\hyperref[#1]{Claim \ref{#1}}}
\newcommand{\consref}[1]{\hyperref[#1]{Construction \ref{#1}}}
\newcommand{\figref}[1]{\hyperref[#1]{Figure \ref{#1}}}
\newcommand{\eqnref}[1]{\hyperref[#1]{Equation \ref{#1}}}


\title{
{{On Mimicking Networks Representing Minimum Terminal Cuts}} \\
}
\vspace{0.15in}
\author{
{Arindam Khan  \ \ \   Prasad Raghavendra  \ \ \ Prasad Tetali \ \ \  L{\'a}szl{\'o} A. V{\'e}gh } \\ \\
School of Computer Science, \\
Georgia Institute of Technology, \\
Atlanta, GA 30332-0765. \\
Email : {\em{akhan67@gatech.edu}}\ \ \ \ {\em{raghavendra@cc.gatech.edu}} \\ {\em{tetali@math.gatech.edu}} \ \ \ \ {\em{lvegh@cc.gatech.edu}}\\
}

\author{
  { Arindam Khan} \footnotemark[1]\ 
  %
  \and { Prasad Raghavendra} \footnotemark[2]\ 
  %
  \and { Prasad~Tetali} \footnotemark[1]\
  %
  \and {L{\'a}szl{\'o} A. V{\'e}gh} \footnotemark[3]\
  }

\date{}

\begin{document}
\maketitle

\renewcommand{\thefootnote}{\fnsymbol{footnote}}

\footnotetext[1]{
School of Computer Science, 
Georgia Institute of Technology, 
Atlanta, GA 30332-0765. 
Email: {{\tt akhan67@gatech.edu}},
{{\tt tetali@math.gatech.edu}} }
\footnotetext[2]{
EECS, Univ of California, Berkeley, CA. Email: {\tt
prasad@cs.berkeley.edu}}

\footnotetext[3]{
Dept of Management, London School of Economics. Email: {\tt
L.vegh@lse.ac.uk}}

\renewcommand{\thefootnote}{\arabic{footnote}}

\begin{abstract}

Given a capacitated undirected graph $G=(V,E)$ with a set of terminals
$K \subset V$, a {\it mimicking network} is a smaller graph $H=(V_H,E_H)$ that
exactly preserves all the minimum cuts between the terminals.  Specifically,
the vertex set of the sparsifier $V_H$ contains the set of terminals $K$ and for every
bipartition $U , K-U $ of the terminals $K$, the size of the minimum cut
separating $U$ from $K-U$ in $G$ is exactly equal to the size of the
minimum cut separating $U$ from $K-U$ in $H$.

%

This notion of a {\it mimicking network} was introduced by Hagerup, Katajainen, Nishimura and
Ragde \cite{HagerupKNR95} who also exhibited a mimicking network of
size $2^{2^{k}}$ for every graph with $k$ terminals.  The best known
lower bound on the size of a mimicking network is linear in the number
of terminals.  More precisely, the best known lower bound is $k+1$ for
graphs with $k$ terminals \cite{ChaudhuriSWZ00}. 

In this work, we improve both the upper and lower bounds reducing the
doubly-exponential gap between them to a single-exponential gap.
Specifically, we obtain the following upper and lower bounds on mimicking networks:
\begin{itemize}
\item 	Given a graph $G$, we exhibit a construction of mimicking network with at most $(|K|-1)$'th Dedekind number ($\approx 2^{{(k-1)} \choose {\lfloor {{(k-1)}/2} \rfloor}}$) of vertices (independent of size of $V$).  Furthermore, we show that the construction is optimal among all {\it
restricted mimicking networks} -- a natural class of mimicking
networks that are obtained by clustering vertices together. \\

\item There exists graphs with $k$ terminals that have no mimicking network of size
	smaller than $2^{\frac{k-1}{2}}$.
	

We also exhibit improved constructions of mimicking networks for
trees and graphs of bounded tree-width.


\end{itemize}

\textbf{keywords:}
Approximation algorithms, Graph algorithms, Vertex sparsification, Cut sparsifier, Mimicking networks, Terminal cuts, Realizable external flow, Network flow.
\end{abstract}

\thispagestyle{empty}
\newpage

\section{Introduction}

Suppose there are small number of terminals or clients that are part of a
huge network such as the internet.  Often, it is useful to
construct a smaller graph which preserves the
properties of the huge network that are relevant to the terminals.
For example, if the terminals or clients are interested in routing
flows through the large network, one would want to construct a small
graph which preserves the routing properties of the original network.
The notion of {\it mimicking networks} introduced by Hagerup et.\ al.
\cite{HagerupKNR95} is an effort in this direction.

Let $G$ be an undirected graph with edge capacities $c_e$ for
$e \in E$, and a set of $k$ terminals $K \subset V$.  
A {\it mimicking network} for $G$ is an undirected
capacitated graph $H=(V_H,E_H)$ such that $K
\subseteq V_H$ and for each subset $U \subset K$ of terminals, the
size of the minimum cut separating $U$ from $K-U$ in $H$ is exactly
equal to the size of the minimum cut separating $U$ and $K-U$ in
the graph $G$.  As a corollary, the set of realizable external flows (possible total
flows at terminals) in $G$ are preserved in a mimicking network.
Therefore, the smaller graph $H$ {\it mimics} the graph $G$ in terms
of external flows routable through it.  The vertices of the mimicking
network that are not terminals, namely $V_H - K$ will be referred to
as {\it Steiner} vertices.

The work of Hagerup et.\ al.\cite{HagerupKNR95} exhibited a construction a
mimicking network with at most $2^{2^k}$ vertices for every graph with
$k$ terminals.  Subsequently, Chaudhuri et.\
al.\cite{ChaudhuriSWZ00} proved that there exists graphs that require
at least $(k+1)$ vertices in its mimicking network.     The same work
also obtained improved constructions of mimicking networks for special
classes of graphs namely, bounded treewidth and outer planar graphs.
Specifically, they showed that graphs of treewidth $t$ admit a
mimicking network of size $k 2^{2^{3(t+1)}}$, while outerplanar graphs
admit mimicking networks of size $(10k -6)$.     

Mimicking networks constituted the main building block in the development of $O(n)$ time algorithm for computing maximum $s-t$ flow in a bounded treewidth network \cite{HagerupKNR95} and for obtaining an optimal solution for the all-pairs minimum-cut problem in the same class of networks \cite{ArikatiCZ98}.
However, there still remained a doubly exponential gap between the
known upper and lower bounds for the size of mimicking networks for general
graphs.

\subsection{Vertex Sparsifiers}

Closely tied to mimicking networks is the more general notion of
{\it vertex sparsifiers} introduced by Moitra \cite{MoitraFocs09}.  
Roughly speaking, a {\it vertex cut sparsifier} is a mimicking network
that only approximately preserves the cut values.  Formally, let $G$ be an undirected graph with edge capacities $c_e$ for
$e \in E$, and a set of $k$ terminals $K \subset V$.  
A {\it vertex cut sparsifier} with quality $q$ is an undirected
capacitated graph $H=(V_H,E_H)$ such that $K
\subseteq V_H$ and for each subset $U \subset K$ of terminals, the
size of the minimum cut separating $U$ from $K-U$ in $H$ is within a
factor $q$ of the size of the minimum cut separating $U$ and $K-U$ in
the graph $G$.  


  The original motivation behind the notion of
vertex cut sparsifiers was to obtain improved approximation
algorithms for certain graph partitioning and routing problems.
If the solution to some combinatorial optimization problem only
depends on the values of the minimum cuts separating terminal subsets,
then given any approximation algorithm for the problem, we can first
compute a cut sparsifier $H$ for graph $G$ and run the approximation
algorithm on the graph $H$ instead of $G$.  If the approximation
guarantee of the algorithm depended on the number of the vertices of
the input graph, then this would yield an algorithm whose
approximation guarantee only depends on the size of the sparsifier
$H$.

The problem of constructing {\it vertex cut sparsifiers} has received
considerable attention since their introduction in
\cite{MoitraFocs09}.  Naturally, the goal would be to obtain as good an approximation as
possible, while keeping the size of the sparsifier $H$ small.  In
fact, the notion of vertex sparsifiers as defined in
\cite{MoitraFocs09} require that the graph $H$ have only the terminals
$K$ as the vertices, i.e., $V_H = K$ (no Steiner vertices).  Much of the subsequent efforts
have been focused on vertex sparsifiers with this additional
requirement that $V_H = K$.  In this setting, Moitra
\cite{MoitraFocs09} showed the existence of vertex sparsifiers with
quality $O(\log^2 k/ \log \log k)$. Subsequent works by Leighton et al.\ \cite{LeightonMStoc10}, Englert et al.\ \cite{EnglertGKRTT10} and Makarychev et al.\  \cite{MM10} gave polynomial-time algorithms for constructing $O(\log k/ \log \log k)$ cut sparsifiers, matching the best known existential upper bound. 
On the negative side, Leighton and Moitra \cite{LeightonMStoc10}
showed a lower bound of $\Omega(\log \log k)$ on the quality of cut
sparsifiers without Steiner vertices, which was subsequently improved to
$\Omega(\sqrt{ \log k/\log  \log k})$ \cite{MM10}.

In light of these lower bounds, it is natural to wonder if better
approximation guarantees could be obtained by vertex sparsifiers that
include {\it steiner vertices}, i.e., vertices of the sparsifier $H$ are a strict
super-set of the set of terminals $K$. 
In fact, for $k \ge 4$, there exist graphs for which no cut sparsifier
without Steiner vertices preserves terminal cuts exactly.  But by
Hagerup \cite{HagerupKNR95},  there exists cut sparsifiers (mimicking
networks) with
$2^{2^4}$ nodes that exactly preserves all the cuts.

Initiating the study of vertex sparsifiers with steiner nodes, Chuzhoy
\cite{Chuzhoy12} exhibited efficient algorithms to construct
$3(1+\epsilon)$-quality cut sparsifiers of size $O(C/\epsilon)^3$ for
a constant $\epsilon \in (0,1)$, where $C$ denotes the total capacity of
the edges incident on the terminals, normalized so as to make all the
edge-capacities at least $1$. The same work also gives an
efficient construction of a $(68+\epsilon)$-quality vertex flow sparsifier of size $C^{O(\log \log C)}$ in time $n^{O(\log C)}. 2^C$. 
Notice that the size of the sparsifiers depend on the total capacity
$C$ of edges incident at the terminals, which could be arbitrarily
large compared to the number of terminals $k$.  



While there has been progress in efficient constructions of vertex
sparsifiers without Steiner nodes, the power of vertex sparsifiers with Steiner nodes is poorly
understood.  For instance, the following question originally posed by
Moitra \cite{MoitraFocs09} remains open.

\textit{Do there exists cut sparsifiers with $k^{O(1)}$ additional
steiner nodes that yield a better than $O(\log k /\log\log k)$
approximation?} 

In fact, Moitra \cite{MoitraFocs09} points out that there could exist
exact cut sparsifiers (quality $1$) with only $k$ additional Steiner
nodes.


\subsection{Our results:}
  In this paper, we show upper and lower bounds for {mimicking
  networks} aka vertex cut sparsifiers with quality $1$.
  First, we present an improved bound on the size of mimicking networks for
  general graphs.  Specifically, we exhibit a construction of
  mimicking networks with
at most $(|K|-1)$'th Dedekind number ($\approx 2^{{(k-1)} \choose
{\lfloor {{(k-1)}/2} \rfloor}}$) of vertices, as opposed to $2^{2^k}$
vertices.

\begin{theorem}
\label{thm:ccut}
For every graph $G$, there exists a mimicking network with
quality $1$ that has at most $(|K|-1)$'th Dedekind number ($\approx 2^{{(k-1)} \choose {\lfloor {{(k-1)}/2} \rfloor}}$)  vertices.  Further, the
mimicking can be constructed in time polynomial in $n$ and $2^k$.
\end{theorem}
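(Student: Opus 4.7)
The plan is to refine the classical Hagerup--Katajainen--Nishimura--Ragde construction by choosing \emph{canonical} minimum cuts with a monotonicity property, which cuts down the set of possible ``signatures'' of a vertex from arbitrary subfamilies of $2^K$ to up-sets of a Boolean lattice on $k-1$ elements. Since the number of up-sets of $2^{[k-1]}$ is exactly the $(k-1)$-th Dedekind number $D(k-1)\approx 2^{\binom{k-1}{\lfloor(k-1)/2\rfloor}}$, this will give the claimed bound.

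Concretely, I would first fix a distinguished terminal $t^*\in K$ and, for every $U\subseteq K$ with $t^*\in U$, define $A(U)\subseteq V$ to be the union of all minimum cuts $A$ satisfying $A\cap K=U$. Standard submodularity/uncrossing shows that this union is itself a minimum cut, so $A(U)$ is the unique maximal minimum cut separating $U$ from $K\setminus U$. The key monotonicity lemma is: if $U_1\subseteq U_2$ (both containing $t^*$) then $A(U_1)\subseteq A(U_2)$. To prove it, note that $A(U_1)\cap A(U_2)$ separates $U_1$ from $K\setminus U_1$ and $A(U_1)\cup A(U_2)$ separates $U_2$ from $K\setminus U_2$, so submodularity of $c(\delta(\cdot))$ forces the chain
\[
f(U_1)+f(U_2)\ \le\ c(\delta(A(U_1)\cap A(U_2)))+c(\delta(A(U_1)\cup A(U_2)))\ \le\ c(\delta(A(U_1)))+c(\delta(A(U_2)))=f(U_1)+f(U_2)
\]
to be tight; hence $A(U_1)\cup A(U_2)$ is a minimum $U_2$-cut and by maximality lies inside $A(U_2)$, yielding $A(U_1)\subseteq A(U_2)$.

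Next I would assign to each vertex $v$ the signature $\mathcal{F}(v)=\{U\subseteq K:\ t^*\in U,\ v\in A(U)\}$, regarded as a subset of the Boolean lattice $\{U\subseteq K : t^*\in U\}\cong 2^{K\setminus\{t^*\}}$. Monotonicity immediately says that $\mathcal{F}(v)$ is an up-set, so the number of distinct signatures, and therefore the number of equivalence classes under $u\sim v\iff \mathcal{F}(u)=\mathcal{F}(v)$, is at most $D(k-1)$. Each terminal has a distinct principal up-set signature, so the terminals remain in separate classes. The mimicking network $H$ is obtained by contracting each class to a single vertex, summing capacities on parallel edges and dropping loops. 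Correctness follows from the standard two-way argument: every $A(U)$ is a union of classes and thus descends to a $(U,K\setminus U)$-cut in $H$ of value exactly $f(U)$; conversely any cut in $H$ lifts back to a $(U,K\setminus U)$-cut in $G$ of the same value, so no smaller cut can exist in $H$. The construction time is $\mathrm{poly}(n,2^k)$: $2^{k-1}$ max-flow computations to determine all $A(U)$'s, plus $O(n\cdot 2^k)$ to compute signatures and contract.

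The main obstacle is the monotonicity lemma and the choice of canonical $A(U)$: one must be certain that maximal minimum cuts exist uniquely and behave well under arbitrary inclusions $U_1\subseteq U_2$, since all the Dedekind savings funnel through this single structural fact. Once it is in place, the up-set counting and the contraction-correctness argument are straightforward adaptations of the Hagerup et al.\ framework.
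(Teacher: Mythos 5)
Your proposal is correct and follows essentially the same route as the paper: choose a canonical minimum cut for each terminal subset on one side of a fixed terminal, prove inclusion-monotonicity via submodularity, observe that each vertex's signature is an up-set of $2^{[k-1]}$, and contract equal-signature classes, giving the Dedekind-number bound. The only cosmetic difference is that you take the unique \emph{maximal} minimum cut for each $U$ (with monotonicity forced by maximality), whereas the paper takes the minimum-cardinality side (with monotonicity forced by minimality); the two are dual and the rest of the argument is identical.
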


We also note that the mimicking network constructed above is a
{\it contraction-based} in the sense that the mimicking network $H$ is
constructed as follows:  Fix an appropriate partition $\mathcal{C}$ of
the vertices of the graph $G$ and contract every subset of vertices $S
\in \mathcal{C}$ in the partition to form a vertex of $H$.
Contraction-based sparsifiers have also referred to as {\it restricted
sparsifiers} in literature \cite{CharikarFocs09} who show that they
are a strictly stronger notion than vertex cut sparsifiers.
For restricted sparsifiers, we will use the terms -- non-terminal and Steiner vertex interchangeably.

We prove that construction is optimal
for the class of contraction-based mimicking networks.  
\begin{theorem}
\label{thm:loweRestrict}
Let $G$ be a graph with unique minimum terminal cuts. Then the
mimicking network constructed using Algorithm
\ref{alg:Exact-Cut-Sparsifier} is an optimal among contraction-based
mimicking networks for $G$ i.e., it has minimum number of vertices
among all contraction-based mimicking networks.
\end{theorem}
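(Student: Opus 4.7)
The plan is to exhibit a specific partition $\mathcal{C}^*$ of $V$, argue that Algorithm~\ref{alg:Exact-Cut-Sparsifier} realises it, and then show that $\mathcal{C}^*$ is the \emph{coarsest} partition whose contraction yields a mimicking network; any other valid partition must therefore have at least $|\mathcal{C}^*|$ parts. The partition I have in mind is defined by the equivalence relation: $u \sim v$ iff, for every $U \subseteq K$, the vertices $u$ and $v$ lie on the same side of the (by hypothesis unique) minimum $(U, K-U)$-cut in $G$. Verifying that this is what Algorithm~\ref{alg:Exact-Cut-Sparsifier} produces should be a direct matching of definitions, so I focus on the optimality of $\mathcal{C}^*$.

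For optimality, let $\mathcal{C}'$ be any partition of $V$ whose contraction $H' = G / \mathcal{C}'$ is a mimicking network, and suppose for contradiction that $\mathcal{C}'$ fails to refine $\mathcal{C}^*$. Then some class of $\mathcal{C}'$ contains two vertices $u,v$ that are separated by $\mathcal{C}^*$, so there is $U \subseteq K$ and a unique minimum $(U, K-U)$-cut $(S, V-S)$ in $G$ with $u \in S$ and $v \notin S$. The key observation is the standard contraction correspondence: any cut $(T, V_{H'} - T)$ in $H'$ separating $U$ from $K-U$ pulls back to $(S', V - S')$ where $S' = \bigcup_{C \in T} C$, with identical capacity (contraction simply sums parallel edges), and by construction $S'$ is a union of $\mathcal{C}'$-classes. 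Since $u$ and $v$ share a $\mathcal{C}'$-class, every such pullback satisfies $\{u,v\} \subseteq S'$ or $\{u,v\} \cap S' = \emptyset$, so $(S', V-S') \neq (S, V-S)$. By the uniqueness hypothesis, $c(S', V-S') > c(S, V-S)$ for every valid pullback, whence the minimum $(U, K-U)$-cut in $H'$ is strictly larger than that in $G$, contradicting the mimicking property. This forces every class of $\mathcal{C}'$ to sit inside a single class of $\mathcal{C}^*$, so $|\mathcal{C}'| \geq |\mathcal{C}^*|$, as desired.

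The main obstacle I anticipate is making the pull-back correspondence completely watertight: one must check that the minimum $(U,K-U)$-cut in $H'$ is realised by a genuine subset of $V_{H'}$ whose pre-image under contraction is a valid $(U, K-U)$-cut in $G$ of the same capacity, and that no ``cheating'' is available by placing a contracted Steiner vertex on a different side from the rest of its class in $G$ -- the whole point of contraction being precisely that such splitting is forbidden. Beyond pinning this down, the argument reduces to a single strict inequality from the uniqueness assumption, and a short sanity check that every two distinct terminals end up in distinct classes of $\mathcal{C}^*$ (witnessed by the unique minimum cut of the form $(\{t\}, K - \{t\})$), so that $\mathcal{C}^*$ is a valid partition of a contraction-based mimicking network in the first place.
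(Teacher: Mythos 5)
Your proposal is correct and follows essentially the same route as the paper: both arguments pull the minimum $(U,K-U)$-cut of a contracted graph back to a cut of equal capacity in $G$, invoke uniqueness to conclude it must coincide with $h_K^G(U)$, and deduce that any valid contraction partition must refine the partition computed by Algorithm~\ref{alg:Exact-Cut-Sparsifier}. The only cosmetic difference is that the paper phrases the refinement edge-by-edge (an edge lying in some minimum terminal cut cannot be contracted), whereas you phrase it directly on partition classes, which is marginally cleaner since it does not rely on the contracted clusters being connected.
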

Next, we obtain an exponential lower bound on the size of
the mimicking networks.  Specifically, we show the following result.
\begin{theorem}
\label{thm:lower}
There exists graphs $G$ for which every mimicking network has size
at least $2^{(k-1)/2}$.
\end{theorem}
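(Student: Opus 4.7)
The plan is to exhibit an explicit family of graphs $G_k$ with $k$ terminals whose terminal cut function is rich enough that any mimicking network on fewer than $2^{(k-1)/2}$ vertices must fail to reproduce it. At a high level, I would identify a collection of $M := 2^{(k-1)/2}$ distinguishing terminal subsets $U_1,\ldots,U_M \subseteq K$ and argue that every mimicking network must devote a distinct Steiner vertex to each $U_i$.

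A natural construction to try is a bipartite gadget-graph. Fix a root terminal $t_0$, split the remaining terminals into halves $T_1, T_2$ of size roughly $(k-1)/2$ each, and for each subset $S \subseteq T_1$ introduce a Steiner gadget (say a single node or a small subgraph) whose edges to $S$, $T_1 \setminus S$, $T_2$, and $t_0$ have carefully tuned capacities of the form $1+\epsilon_S$ for generic perturbations $\epsilon_S > 0$. Set $U_S := S \cup \{t_0\}$. The capacities should be arranged so that (a) the minimum $U_S$-vs-$(K \setminus U_S)$ cut in $G_k$ is unique, and (b) the values $c_G(U_S)$, $c_G(U_{S'})$, $c_G(U_S \cup U_{S'})$, $c_G(U_S \cap U_{S'})$ satisfy a strict submodularity inequality for every pair $S \neq S'$.

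Given any mimicking network $H$ of $G_k$, I would then argue that the unique minimum cut of $U_S$ in $H$ must isolate a Steiner ``witness'' $v_S \in V_H \setminus K$, and that the assignment $S \mapsto v_S$ is injective; this immediately gives $|V_H| \geq M + k$. The injectivity would be proved by contradiction: assuming a shared witness $v_S = v_{S'}$ for distinct $S, S'$, the standard uncrossing lemma for minimum cuts together with the submodularity of $c_H$ forces equality among cut values that the construction of $G_k$ makes strict, a contradiction. This step is the core of the argument; the remaining steps (uniqueness of min-cuts in $G_k$ and preservation of cut values by any mimicking network) are comparatively routine.

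The main obstacle is the joint engineering of (a) and (b): the construction must simultaneously make all $\binom{M}{2}$ submodular inequalities strict while still giving every $U_S$ a unique min-cut, and the choice of $M$ subsets must be large enough to yield the target $2^{(k-1)/2}$ bound. A plausible alternative route is a polytope-dimension argument: the polytope of realizable external flows of $G_k$ has up to $2^{k-1}$ facets, whose combinatorial structure should constrain the size of any graph whose flow polytope has the same facet lattice. Either way the crux is to translate ``richness of the cut function'' into ``many Steiner vertices,'' and identifying the correct notion of richness (roughly, strict submodularity across an exponentially large antichain of subsets) is where the real work lies.
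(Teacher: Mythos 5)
Your proposal is a plan rather than a proof, and the step you yourself flag as ``the core of the argument'' is precisely where it breaks down. Two concrete gaps. First, the graph $G_k$ is never actually constructed: properties (a) and (b) are requirements you would need to engineer, and you give no reason to believe that all $\binom{M}{2}$ strict submodularity inequalities can be made to hold simultaneously with uniqueness of every minimum terminal cut (note also that your sets $U_S = S \cup \{t_0\}$ are nested, not an antichain, which undercuts the ``richness'' you invoke at the end). Second, and more fundamentally, the witness-injectivity step does not follow from anything you have set up. A mimicking network $H$ is only required to reproduce the $2^{k-1}-1$ numbers $h_K^G(U)$; it need not inherit any structural correspondence with $G$, its minimum cuts for the $U_S$ need not be unique even if those of $G_k$ are, and a minimum cut in $H$ is a bipartition of $V_H$ with no canonical Steiner vertex attached to either side. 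So the assertion that ``the unique minimum cut of $U_S$ in $H$ must isolate a Steiner witness $v_S$'' is unsupported, and the claim that a shared witness plus uncrossing ``forces equality among cut values'' is exactly the lemma you would need to prove and have not. Without it the map $S \mapsto v_S$ is not even well defined, let alone injective.

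For comparison, the paper avoids all of this by not fixing any particular hard graph. It considers the set $M_k \subseteq \mathbb{R}^{2^{k-1}-1}$ of all achievable minimum terminal cut vectors, shows it is a full-dimensional convex cone (convexity by superposing two weighted graphs on a common terminal set; full dimension by exhibiting, for each coordinate $i$, a two-Steiner-vertex gadget whose cut vector can be perturbed in coordinate $i$ alone), and then observes that for a fixed $t$-vertex network and a fixed combinatorial pattern of which edges lie in which minimum terminal cut, the cut vector of $H$ is a linear image of the edge-weight vector in $\mathbb{R}^{\binom{t}{2}}$. Finitely many linear maps cannot cover a full-dimensional set unless one has rank $2^{k-1}-1$, forcing $\binom{t}{2} \geq 2^{k-1}-1$ and hence $t \geq 2^{(k-1)/2}$. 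This is a non-constructive dimension-counting argument; your briefly mentioned ``polytope-dimension alternative route'' is in fact much closer to the paper's actual proof than your main plan, and is the direction worth developing.
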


We also obtain improved constructions of mimicking networks for
special classes of graphs like trees and graphs of bounded tree width.
For the case of a tree, we show that $\frac{13|K|}{8}-\frac{3}{2}$
suffice, while for a graph with treewidth $t$ there exists mimicking
networks of size $|K|2^{ (3t+2) \choose {\lfloor {{(3t+2)}/2}
\rfloor}}$.  We also exhibit mimicking networks that preserve cuts separating terminal set of size $\le 2$ from other terminals using only one extra Steiner vertex.

\paragraph{Related Work}  In an independent work, Krauthgamer and Rika
\cite{KrauthgamerR12} obtained upper and lower bounds for the size of
mimicking networks in general graphs, and certain special classes of
graphs.  Specifically, they show a lower bound of $2^{\Omega(k)}$ for
the size of mimicking networks even for the case of bipartite graphs.
Furthermore, the lower bound is shown to hold for the size of any data
structure that stores all the minimum terminal cut values of a graph.
The paper also obtains improved upper and lower bounds for the special
case of planar graphs.

It has been brought to our attention that the improved upper bound of
Dedekind number of vertices for mimicking networks was also observed
by Chambers and Eppstein \cite{ChambersE10}.


\section{Preliminaries}
In this section, we set up the notation and present formal definitions
of vertex cut sparsifiers and mimicking networks.   Let $G=(V, E)$ be an undirected capacitated graph with
edge capacities $c(e)$ for all edges $e \in E$  and a set $K \subset V$ of terminals of size $k$.
Without loss of generality, we assume that $G$ is connected, otherwise
each component can be handled separately.
Let $c: E \rightarrow \mathbb{R}^+$ be the capacity function of the graph. Let $h_G : 2^V \rightarrow  \mathbb{R}^+$  denote the cut function of $G$:
\begin{equation}
h_G(A)=\sum_{e \in \delta(A)} c(e) \nonumber
\end{equation}
where $\delta(A)$ denote the set of edges crossing the cut $(A, V \setminus A)$.
Now we define terminal cut function $h_K^G : 2^K \rightarrow
\mathbb{R}^+$ on $K$ as
\begin{equation}
h_K^G(U)=min_{A\subset V , A \cap K =U}h_G(A)\nonumber
\end{equation}

In words, $h_K^G(U)$ is the cost of the minimum cut separating $U$
from $K \setminus U$ in $G$. Let $S(U)$ be the smallest subset of $V$ such that $h_G(S(U))=h_K^G(U), S(U) \cap K =U$ i.e., $S(U)$ is the partition containing $U$ in the minimum terminal cut separating $U$ from $K-U$ and if there are multiple minimum terminal cuts we take any one with minimum number of vertices  in the partition that contains $U$.
For any fixed $U \subset K$, the minimum
cut $h_K^G(U)$ can be computed efficiently.  We will sometimes abuse
the notation and use $h_K^G(U)$ to denote both the size of the minimum
cut and the set of edges belonging
to the minimum terminal cut.

If $|U|=1$, we call the minimum terminal cut separating $U$ from $K-U$ to be \textit{mono-terminal cut}. 
If $|U|\le 2$, we call the minimum terminal cut separating $U$ from $K-U$ to be \textit{bi-terminal cut}. 

\begin{definition}
$H=(V_H,E_H)$ is a cut-sparsifier for the graph $G=(V, E)$ and the
terminal set $K$, if $K \subseteq V_H$ and if the cut function $h_K^H: 2^{V_H} \rightarrow \mathbb{R}^+$ of $H$ satisfies for all $U \subset K$, 
\begin{equation}
h_K^G(U) \le h_K^H(U). \nonumber
\end{equation}
\end{definition}
Quality of cut sparsifier is a measure of how well the cut function of $H$ approximates the terminal cut function.
\begin{definition}
The quality of a cut sparsifier $H$: $Q_C(H)$ is defined as 
\begin{equation}
max_{U \subset K} h_K^H(U)/h_K^G(U). \nonumber
\end{equation}
\end{definition}
In this paper, we will study mimicking networks that are a special
class of vertex sparsifiers.
\begin{definition}
A vertex sparsifier $H$ for graph $G$ and terminal set $K$ is a
mimicking network if $Q_C(H)=1$.
\end{definition}
Nearly all existing constructions of vertex sparsifiers are based on
edge-contractions.  Now we present a simple lemma to show contraction of edges always gives us a vertex sparsifier.
\begin{lemma}
Given a graph $G$ and an edge $e$, contracting the edge $e$ in the
graph $G$ will not decrease the value of any minimum terminal cut.\cite{MoitThes}
\end{lemma}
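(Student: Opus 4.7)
The plan is to prove this by exhibiting, for every terminal bipartition, a natural injection from cuts in the contracted graph $G'$ to cuts in the original graph $G$ that has the same cost. This reduces the minimum-cut comparison to an elementary check.

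Fix the edge $e=(u,v)$ to be contracted, and let $G'$ denote the resulting graph in which $u$ and $v$ are merged into a single vertex $w$; parallel edges add their capacities and the self-loop coming from $e$ itself is discarded. Fix any subset $U\subset K$ for which the minimum terminal cut separating $U$ from $K\setminus U$ is still well-defined in $G'$ (i.e., the contraction does not force two terminals from opposite sides of $U$ to coincide). Let $(A',V(G')\setminus A')$ be any cut in $G'$ with $A'\cap K=U$, and define the corresponding set $A\subseteq V$ by
\[
A \;=\; \bigl(A'\setminus\{w\}\bigr)\;\cup\;\begin{cases}\{u,v\} & \text{if } w\in A',\\ \emptyset & \text{if } w\notin A'.\end{cases}
\]
By construction $A\cap K = A'\cap K = U$, so $(A,V\setminus A)$ is a feasible cut separating $U$ from $K\setminus U$ in $G$.

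Next I would verify that the two cuts have the same capacity. The non-contracted vertices appear identically in $G$ and in $G'$, so every edge of $G$ not incident to $\{u,v\}$ crosses $(A,V\setminus A)$ if and only if the same edge crosses $(A',V(G')\setminus A')$, contributing the same capacity. For edges of $G$ incident to $u$ or $v$ but not to both, these correspond one-to-one (possibly after combining parallel edges in $G'$) to edges of $G'$ incident to $w$; since $u$ and $v$ lie on the same side of $(A,V\setminus A)$ by construction, such an edge crosses the cut in $G$ exactly when the corresponding edge of $G'$ crosses $(A',V(G')\setminus A')$. Finally, the contracted edge $e$ has both endpoints in the same part of $(A,V\setminus A)$ and therefore does not contribute to $h_G(A)$, matching the fact that $e$ has been removed in $G'$. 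Summing, $h_G(A)=h_{G'}(A')$.

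Applying this to a minimum terminal cut of $G'$ gives $h_K^G(U)\le h_G(A) = h_{G'}(A') = h_K^{G'}(U)$, which is the desired inequality. The only subtle point — and essentially the only "obstacle" — is the bookkeeping around contractions when both endpoints of $e$ are terminals lying on opposite sides of $U$; in that case no cut of $G'$ separates $U$ from $K\setminus U$, so the statement is vacuous, and in all other cases the side-assignment of $w$ dictates a consistent side-assignment for $u$ and $v$ and the argument above goes through unchanged.
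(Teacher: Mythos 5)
Your proposal is correct and is essentially the paper's own argument, just written out in more detail: the paper likewise observes that any cut of $G/e$ separating $U$ from $K\setminus U$ corresponds to a cut of $G$ of the same capacity with $u,v$ on the same side, so the minimum can only increase. Your explicit handling of the degenerate case where both endpoints of $e$ are terminals on opposite sides is a reasonable extra precaution that the paper dispatches with a parenthetical remark.
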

\begin{proof} 
Let $G/e$ be the graph obtained by contracting the edge $e = (u,v)$ in
the graph $G$.  For any $U \subset K$, the minimum cut in $G/e$
separating $U$ from $K-U$ is also a cut in G separating $U$ from
$K-U$, with the additional restriction that $u$ and $v$ appear on the same side of the cut.
Thus contracting an edge (whose endpoints are not both terminals) cannot decrease the value of minimum cut separating $U$ from $K \setminus U$ for $ U \subset K$. 
\end{proof}
Vertex sparsifiers that can be obtained by contracting edges of the
original graph will be referred to as {\it contraction-based} vertex sparsifiers.
\begin{definition}
	A graph $H=(V_H, E_H)$ is a {\it contraction-based} vertex
	sparsifier/mimicking network of graph $G=(V, E)$ with terminal set $K$ if there exists a function $f: V \rightarrow V_H$ such that  the edge cost function of $H$ is defined as follow: $c_H(y, z)= \sum_{u, v|f(u)=y, f(v)=z} c(u,v)$ where $(y, z) \in E(H)$ and $(u, v) \in  E(G)$.
\end{definition}

\section{Improved Upper Bounds on Size of Mimicking Networks}

In this section we construct a mimicking network for a given
undirected, capacitated graph $G=(V,E)$ with a set of terminals $K
(\subset V) := \{ v_1, v_2 \cdots v_k \}$.  Without loss of
generality, we may assume $G$ to be connected, otherwise we can consider each component separately.

\begin{theorem}\textit{(Restatement of Theorem \ref{thm:ccut})}
For every graph $G$, there exists a mimicking network with
quality $1$ that has at most $(|K|-1)$'th Dedekind number ($\approx 2^{{(k-1)} \choose {\lfloor {{(k-1)}/2} \rfloor}}$)  vertices.  Further, the
mimicking network can be constructed in time polynomial in $n$ and $2^k$.
\end{theorem}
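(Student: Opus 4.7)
The plan is to generalize the Hagerup et al.\ $2^{2^k}$ construction by exploiting submodularity of the cut function to restrict attention to a \emph{monotone} family of minimum terminal cuts indexed by subsets containing a fixed terminal, and then to cluster vertices by a signature whose image is forced to be a monotone Boolean function.

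First, fix a reference terminal $v_1 \in K$ and, for each $U \subseteq K$ with $v_1 \in U$, choose $S(U) \subseteq V$ to be the inclusion-minimal set satisfying $v_1 \in S(U)$, $S(U) \cap K = U$, and $h_G(S(U)) = h_K^G(U)$. Uniqueness of $S(U)$ follows from a standard uncrossing argument: if $A, B$ are both such minimizers, then submodularity gives $h_G(A \cap B) + h_G(A \cup B) \le h_G(A) + h_G(B)$, while both $A \cap B$ and $A \cup B$ still separate $U$ from $K \setminus U$ with $v_1$ on the $U$-side, so both are themselves min cuts; iterating over all minimizers yields a unique smallest one.

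The key structural step is monotonicity of the map $U \mapsto S(U)$: if $v_1 \in U \subseteq U' \subseteq K$, then $S(U) \subseteq S(U')$. Writing $A = S(U)$, $B = S(U')$, we have $(A \cap B) \cap K = U$ and $(A \cup B) \cap K = U'$ with $v_1$ on the appropriate side of each, so $h_G(A \cap B) \ge h_K^G(U)$ and $h_G(A \cup B) \ge h_K^G(U')$; combined with submodularity this forces both to be min cuts, and minimality of $S(U)$ yields $A \cap B = A$, i.e.\ $A \subseteq B$. Next, assign each vertex $v \in V$ the signature $\sigma_v : \{U \subseteq K : v_1 \in U\} \to \{0,1\}$ defined by $\sigma_v(U) = \mathbf{1}[v \in S(U)]$. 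By monotonicity, every $\sigma_v$ is a monotone Boolean function on the lattice of subsets of $K \setminus \{v_1\}$, so the number of distinct signatures is at most the $(k-1)$-th Dedekind number $M(k-1) \approx 2^{\binom{k-1}{\lfloor (k-1)/2 \rfloor}}$.

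Form $H$ by contracting all vertices sharing the same signature; the terminals automatically receive pairwise distinct signatures since $v_j \in S(U)$ iff $v_j \in U$, so they remain separate vertices. The contraction lemma of Section 2 already gives $h_K^H(U) \ge h_K^G(U)$ for every $U \subset K$. Conversely, for $U \ni v_1$ the cut $(S(U), V \setminus S(U))$ is constant on each signature class, so it descends to a cut of the same capacity in $H$, yielding $h_K^H(U) \le h_K^G(U)$; for $U \not\ni v_1$ the same argument applied to $K \setminus U$ works, since $h_K^G(U) = h_K^G(K \setminus U)$. The construction runs in time $\mathrm{poly}(n, 2^k)$ because it requires $2^{k-1}$ min-cut computations together with a grouping and contraction pass. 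The main obstacle is establishing the monotonicity $S(U) \subseteq S(U')$; once that is in hand, the Dedekind bound and the mimicking property are essentially bookkeeping on top of the uncrossing lemma.
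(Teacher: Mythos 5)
Your proposal is correct and follows essentially the same route as the paper: cluster vertices by which side of each minimum terminal cut they lie on, prove the monotonicity $U \subseteq U' \Rightarrow S(U) \subseteq S(U')$ by submodular uncrossing with minimal minimizers (the paper's Lemma \ref{lemma:subsetCut}), and conclude that each vertex's signature is a monotone Boolean function on subsets of $K$ minus a fixed reference terminal, bounding the cluster count by the $(k-1)$-th Dedekind number. Counting monotone Boolean functions versus counting antichains of upward-closed families (as the paper phrases it) is the same Dedekind count, so there is no substantive difference in the argument.
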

\begin{proof}
First we present the algorithm \ref{alg:Exact-Cut-Sparsifier} that
constructs the mimicking network from the graph.

\IncMargin{1em}
\begin{algorithm}
\SetKwInOut{Input}{input}\SetKwInOut{Output}{output}
\SetAlgoLined
\LinesNumbered
\Input{ A capacitated undirected graph $G$, set of terminals $K \subset V$ }
\Output{ A capacitated undirected graph $H$.}
\BlankLine
Find all $2^{k-1} -1$ minimum terminal cuts using max-flow algorithm\;
Partition the graph into $2^{2^{k-1} -1}$ clusters $\mathcal{C}_1, \mathcal{C}_2 \cdots \mathcal{C}_{2^{2^{k-1} -1}}$ such that two vertices $u, v$ belong to same cluster if they appear on same side of all the minimum terminal cuts \;
Contract each non-empty cluster into single node   \;
\emph{ Return} the contracted graph $H$   \;
\caption{{ \sc{Algorithm to construct Exact-Cut-Sparsifier}}}\label{alg:Exact-Cut-Sparsifier}
\end{algorithm}\DecMargin{1em}
We claim that $H$ exactly preserves all minimum terminals cuts.
\begin{claim}
$H$ is a mimicking network for $G$.
\end{claim}
\textit{Proof of claim: }
Note that we are just mapping vertices of $G$ to vertices of $H$ and not deleting any edges of $G$ in $H$, thus the minimum cut value can only grow up. Hence, $h_K^G(U) \le h_K^{H}(U)$ for any $U \subset K$. But the minimum cut separating $U$ from $K-U$ in $H$ is the dictator cut parallel to the $i$'th axis. It contains only the edges of the minimum cut separating $U$ and $K-U$ in $G$. Thus $h_K^G(U) \ge h_K^{H}(U)$. Therefore we get $h_K^G(U) = h_K^{H}(U)$. $\qed$\\
We upper bound the number of vertices in $H$ by Dedekind number to complete the proof.
\end{proof}

While the algorithm creates $2^{2^{k-1}}-1$ clusters, we will argue
that by an appropriate choice of cuts many of the clusters will be
empty.  Let $N(k)$ be the number of vertices in the mimicking network
constructed by Algorithm \ref{alg:Exact-Cut-Sparsifier}, i.e., it is
the number of non-empty regions created by $2^{k-1}-1$  minimum
terminal cuts.  Here we show $N(k)$ is at most $(k-1)$th Dedekind
number. Dedekind numbers are a rapidly-growing integer sequence
defined as follows: Consider the partial order $\subseteq$ induced on the subsets
of an $n$-element set by containment.  The $n^{th}$ Dedekind number
$M(n)$ counts the number of antichains in this partial order. Equivalently, it counts monotonic Boolean functions of $n$ variables, the number of elements in a free distributive lattice with $n$ generators, or the number of abstract simplicial complexes with $n$ elements

For a terminal cut $[U,K-U]$ where $v_k\notin U$, let $\{S(U),
V_G-S(U)\}$ denote the partition induced by the minimum cut separating
$[U,K-U]$.  If there are multiple minimum terminal cuts we take any
one with smallest cardinality $|S(U)|$.
Now let us prove two structural properties of these minimum terminal cuts.
\begin{lemma}
\label{lemma:subsetCut}
If $X \subseteq Y \subseteq K$ then $S(X) \subseteq S(Y)$.
\end{lemma}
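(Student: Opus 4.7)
The plan is to prove $S(X) \subseteq S(Y)$ via a standard uncrossing argument using submodularity of the cut function $h_G$, combined with the minimum-cardinality tie-breaking rule in the definition of $S(\cdot)$.

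First, I would set $A = S(X)$ and $B = S(Y)$, so that by definition $A \cap K = X$, $B \cap K = Y$, $h_G(A) = h_K^G(X)$, and $h_G(B) = h_K^G(Y)$. Because $X \subseteq Y$, I observe that $(A \cup B) \cap K = X \cup Y = Y$ and $(A \cap B) \cap K = X \cap Y = X$, so $A \cup B$ is a feasible cut separating $Y$ from $K \setminus Y$ and $A \cap B$ is a feasible cut separating $X$ from $K \setminus X$. This gives the inequalities $h_G(A \cup B) \ge h_K^G(Y)$ and $h_G(A \cap B) \ge h_K^G(X)$.

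Next I would invoke the submodularity of the cut function:
\begin{equation}
h_G(A) + h_G(B) \;\ge\; h_G(A \cup B) + h_G(A \cap B). \nonumber
\end{equation}
Combining the two displayed inequalities above with the definitions of $h_G(A)$ and $h_G(B)$, I get $h_K^G(X) + h_K^G(Y) \ge h_G(A \cup B) + h_G(A \cap B) \ge h_K^G(Y) + h_K^G(X)$, so equality holds throughout. In particular $A \cap B$ is a minimum cut separating $X$ from $K \setminus X$.

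Finally, I would use the tie-breaking rule that $S(X) = A$ has the smallest cardinality among all such minimum-cut sides containing $X$. Since $A \cap B \subseteq A$ and $A \cap B$ is itself a minimum $X$-side, minimality forces $|A \cap B| \ge |A|$, hence $A \cap B = A$, i.e.\ $A \subseteq B$, which is exactly $S(X) \subseteq S(Y)$. The only subtle point is ensuring that the minimum-cardinality choice is compatible across different terminal sets; this is handled automatically by the uncrossing step, since $A \cap B$ achieves the minimum cut value for $X$ and cannot be strictly smaller than the chosen minimizer $S(X)$. I do not expect any serious obstacle beyond correctly identifying the two feasible cuts $A \cup B$ and $A \cap B$ and applying submodularity.
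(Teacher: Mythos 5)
Your proof is correct and follows essentially the same route as the paper's: uncrossing $S(X)$ and $S(Y)$ via submodularity, observing that $S(X)\cup S(Y)$ and $S(X)\cap S(Y)$ are feasible cuts for $Y$ and $X$ respectively so that equality holds throughout, and then invoking the minimum-cardinality tie-breaking in the definition of $S(X)$ to conclude $S(X)\cap S(Y)=S(X)$. No gaps.
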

\begin{proof}
From submodularity property of cuts we get,
\begin{eqnarray}
(h_G(S(X)) +h_G(S(Y))) &\ge& (h_G(S(X) \cup S(Y)) + h_G(S(X) \cap S(Y))) \nonumber \\
& \ge &(h_G(S(X\cup Y) + h_G(S(X\cap Y))) = (h_G(S(Y)) +h_G(S(X))) .  
\end{eqnarray}
Here the second inequality follows from the fact $h_G(S(X) \cup S(Y)) \ge h_G(S(X \cup Y))$ and $h_G(S(X) \cap S(Y)) \ge h_G(S(X \cap Y))$. Now as all inequalities are tight in (1), we get $h_G(S(X) \cup S(Y)) =  h_G(S(X \cup Y))= h_G(S(Y))$ and $ h_G(S(X) \cap S(Y)) =  h_G(S(X \cap Y))= h_G(S(X))$.  
We have $h_G(S(X) \cap S(Y)) = h_G(S(X))$, but recall that among all
minimum cuts separating $(X, K-X)$, $S(X)$ has the smallest
cardinality. This implies $S(X) \subseteq S(Y)$.
\end{proof}

\begin{lemma}
\label{lemma:disjointCut}
If $X \cap Y =  \phi$ then $S(X) \cap S(Y)=\phi$.
\end{lemma}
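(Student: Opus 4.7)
My plan is to mirror the submodularity argument used in \lemref{lemma:subsetCut}, but apply it to $S(X)$ together with the complement $V\setminus S(Y)$ rather than to $S(X)$ and $S(Y)$ directly. Since $h_G(A)=h_G(V\setminus A)$ for any set in an undirected graph, this will convert the usual submodular inequality into a ``posimodular'' inequality involving the differences $S(X)\setminus S(Y)$ and $S(Y)\setminus S(X)$.

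First I would verify that $S(X)\setminus S(Y)$ is itself a feasible cut separating $X$ from $K-X$. Because $S(X)$ is one side of a minimum cut for $(X,K-X)$, we have $S(X)\cap K = X$; similarly $S(Y)\cap K = Y$. Combined with the hypothesis $X\cap Y=\emptyset$, this gives $X\subseteq S(X)\setminus S(Y)$ and $(S(X)\setminus S(Y))\cap K = X$. Hence $h_G(S(X)\setminus S(Y))\ge h_K^G(X)=h_G(S(X))$, and symmetrically $h_G(S(Y)\setminus S(X))\ge h_G(S(Y))$.

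Next, applying submodularity of $h_G$ to $S(X)$ and $V\setminus S(Y)$, and using $h_G(V\setminus S(Y))=h_G(S(Y))$, yields
\[
h_G(S(X))+h_G(S(Y))\ \ge\ h_G(S(X)\setminus S(Y))+h_G(S(Y)\setminus S(X)).
\]
Chaining with the two lower bounds from the previous step forces equality throughout; in particular $h_G(S(X)\setminus S(Y))=h_G(S(X))$, so $S(X)\setminus S(Y)$ is also a minimum terminal cut separating $X$ from $K-X$.

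Finally I invoke the minimum-cardinality tie-breaking rule in the definition of $S(X)$: among all minimum terminal cuts of $(X,K-X)$, $S(X)$ has the smallest cardinality on the $X$-side. Since $S(X)\setminus S(Y)\subseteq S(X)$ is another such minimum cut, we must have $|S(X)\setminus S(Y)|\ge|S(X)|$, which forces $S(X)\cap S(Y)=\emptyset$. The only mild subtlety is ensuring $S(X)\setminus S(Y)$ actually qualifies as a valid minimum-cut candidate (i.e., contains all of $X$ and no terminal outside $X$); this is exactly where the disjointness $X\cap Y=\emptyset$ is used, and the rest of the argument is a direct parallel of \lemref{lemma:subsetCut}.
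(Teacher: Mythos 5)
Your proof is correct and follows essentially the same route as the paper's: the posimodular inequality $h_G(S(X))+h_G(S(Y))\ge h_G(S(X)\setminus S(Y))+h_G(S(Y)\setminus S(X))$ combined with the minimum-cardinality tie-breaking in the choice of $S(\cdot)$. You simply spell out two details the paper leaves implicit (that $S(X)\setminus S(Y)$ is a feasible cut for $(X,K-X)$ precisely because $X\cap Y=\emptyset$, and the derivation of posimodularity from submodularity and symmetry of $h_G$), so no further changes are needed.
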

\begin{proof}
Assume $S(X) \cap S(Y)\neq \phi$. Then $ h_G(S(X) \setminus S(Y)) +  h_G(S(Y) \setminus S(X))  \le ( h_G(S(X))+  h_G(S(Y))$. On the other hand as we always take the minimum terminal cut with smallest $|S(X)|$. Hence $ h_G(S(X)) <  h_G(S(X) \setminus S(Y))$ and $ h_G(S(Y)) <  h_G(S(Y) \setminus S(X))$. This contradicts.
\end{proof}

Note that each region created by algorithm
\ref{alg:Exact-Cut-Sparsifier}, is basically intersection of
partitions containing $S(X)$ for some minimum terminal cuts $(X,K-X)$ and
complement of  $S(X)$ for remaining minimum terminal cuts.
Let  $X \subseteq \{ U \subset K, v_k \notin U \}$ i.e., $X$ is a
collection of subsets of $K$ that do not contain $v_k$. Let us define
$A(X)= (\cap_{Z \in X} S(Z)) \cap (\cap_{W \notin X}
\overline{S(W)})$. Each $A(X)$ corresponds to a cluster produced by
the algorithm.  We will show that $A(X)$ is empty for many choices of
$X$.

\begin{lemma}
\label{lemma:upsetLemma}
If $A(X) \neq \phi$ then $X$ is upward closed set i.e., ($\forall P \in X, P \subseteq Q \Rightarrow Q \in X$). 
\end{lemma}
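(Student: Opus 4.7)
The plan is to derive the upward-closure of $X$ directly from the monotonicity statement in \lemref{lemma:subsetCut} together with the explicit form of $A(X)$. The argument is short: if $X$ failed to be upward closed, there would be some $P \in X$ and some $Q \supseteq P$ with $Q \notin X$ (both ranging over subsets of $K$ not containing $v_k$, so that $S(Q)$ is among the cuts used to define the partition). Picking any vertex $u \in A(X)$, which exists by hypothesis, I would then show that $u$ must lie simultaneously in $S(Q)$ (because $u \in S(P)$ and $S(P) \subseteq S(Q)$) and outside $S(Q)$ (because $Q \notin X$ forces $u \in \overline{S(Q)}$), a contradiction.

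More concretely, first I would recall the defining identity
\[
A(X) \;=\; \Bigl(\bigcap_{Z \in X} S(Z)\Bigr) \cap \Bigl(\bigcap_{W \notin X} \overline{S(W)}\Bigr),
\]
where $Z, W$ range over the subsets of $K$ not containing $v_k$. Any $u \in A(X)$ therefore satisfies $u \in S(Z)$ for every $Z \in X$ and $u \notin S(W)$ for every $W \notin X$. Next, assuming $P \in X$ and $P \subseteq Q$ with $Q \notin X$, I would invoke \lemref{lemma:subsetCut} to get $S(P) \subseteq S(Q)$. From $P \in X$ we have $u \in S(P)$, hence $u \in S(Q)$; from $Q \notin X$ we have $u \in \overline{S(Q)}$, so $u \notin S(Q)$. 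This contradiction forces $Q \in X$, which is exactly the upward-closed property.

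I do not expect any real obstacle here; the statement is an immediate corollary of the monotonicity lemma once the definition of $A(X)$ is unpacked. The only point requiring a bit of care is the side condition that $X$ indexes only subsets avoiding $v_k$, so that the ``larger'' set $Q$ in the upward-closure condition is itself a legitimate index for one of the cuts defining the partition; since the statement of the lemma restricts attention to such sets, this is automatic. Thus the whole proof will consist of the few lines outlined above, with the substantive content carried entirely by \lemref{lemma:subsetCut}.
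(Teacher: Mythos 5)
Your argument is correct and is essentially identical to the paper's own proof: both assume $P \in X$, $P \subseteq Q$, $Q \notin X$, invoke Lemma \ref{lemma:subsetCut} to get $S(P) \subseteq S(Q)$, and derive a contradiction from $A(X) \subseteq S(P) \cap \overline{S(Q)} = \emptyset$. Phrasing it via an explicit element $u \in A(X)$ rather than set containment is an immaterial difference.
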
 
\begin{proof}
Suppose there exists a $Q \notin X$ such that for some $P \in X$ and $Q \supseteq P$.
From lemma \ref{lemma:subsetCut}, 
\begin{equation}
S(P) \subseteq S(Q).
\end{equation}
Also, note that by definition, $A(X) \subseteq S(P) \cap
\overline{S(Q)}$.  Hence, we get $A(X) \subseteq S(P) \cap
\overline{S(Q)} = \phi$ -- a contradiction.
\end{proof}


From lemma \ref{lemma:upsetLemma}, if $A(X) \neq \phi$ then $X$ is upward closed set. Now minimal elements of upper sets form an antichain. So $N(k)$ is upper bounded by the number of antichains of subsets of an $(k-1)$-element set i.e., $M(k-1)$. Kleitman and Markowsky\cite{KM75} had shown that:
\begin{equation}
{n \choose {\lfloor {n/2} \rfloor}} \le \log M(n) \le {n \choose {\lfloor {n/2} \rfloor}}(1+O(\log n/n))
\end{equation}
Moreover from lemma \ref{lemma:disjointCut}, if there are two completely disjoint elements in $X$ that will lead to an empty region. So $N(k)$ is upperbounded by the number of antichains of subsets of $(k-1)$-element sets where all members of the antichain share at least one common element. Let us call this number to be $Z(k-1)$. Clearly $M(k-2) \le Z(k-1) \le M(k-1)$.
Table \ref{table:nonlin} compares different bounds.
\begin{table}[ht]
\caption{Different bounds related to N(k)} 
\centering  
\begin{tabular}{c c c c c c} 
\hline\hline                        
$k$ & Lower & Best Upper  & Upper Bound & $(k-1)$th & $2^{2^{k-1}}-1$\\ [0.5ex] 
 &bound & bound & from Contraction  & Dedekind No. &  \\[0.5ex]
&  &   & $Z(k-1)$ & $M(k-1)$ &  \\[0.5ex]
\hline                  
2 & 2 & 2 &  2 & 2  & 3\\ 
3 & 3 & 3 & 4  & 5  &  15\\
4 & 5 & 5 & 11 & 19 & 255\\
5 & 6 & 6 & 54 & 167 &  $65535$\\ 
6 & 9 & * &687 & 7580 &  $4.29 \times 10^9$\\ [1ex]      
\hline 
\end{tabular}
\label{table:nonlin} 
\end{table}

The observations made in this section together with results on bounded treewidth on \cite{ChaudhuriSWZ00} implies improved bound for graphs with bounded treewidth.
\begin{corollary}
Let $G$ be a $n$-vertex network of treewidth $t$. Then we can create
an mimicking network for $G$ that has size at most $k 2^{3(t+1) \choose {\lfloor {3(t+1)/2} \rfloor}}$.
\end{corollary}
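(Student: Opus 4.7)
The plan is to adapt the bounded-treewidth construction of Chaudhuri, Subrahmanyam, Wagner, and Zaroliagis \cite{ChaudhuriSWZ00}, replacing their use of the Hagerup et al.\ bound $2^{2^k}$ as a black box with our improved bound from Theorem \ref{thm:ccut}.

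First I would fix a tree decomposition $(T,\{X_v\}_{v\in V(T)})$ of $G$ of width $t$, so every bag satisfies $|X_v|\le t+1$. Processing the decomposition bottom-up as in \cite{ChaudhuriSWZ00}, each subtree rooted at $v$ gives rise to a subproblem consisting of the induced subgraph $G_v$ together with an \emph{effective terminal set} formed by (i) the original terminals that lie in $G_v$, and (ii) the boundary vertices through which $G_v$ communicates with the rest of $G$. Two structural properties of their construction that I would use are: (a) only $O(k)$ such subproblems are required to assemble the final sparsifier, and (b) for a suitably chosen (nice, balanced) decomposition each subproblem has at most $3(t+1)$ effective terminals.

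The key modification is then a single swap: instead of invoking the Hagerup et al.\ bound to replace each subproblem by a local mimicking network of size $2^{2^{3(t+1)}}$, I would invoke Theorem \ref{thm:ccut} on the effective terminal set, producing a local mimicking network of size at most $M(3(t+1)-1) \le 2^{\binom{3(t+1)}{\lfloor 3(t+1)/2\rfloor}}$. Because a mimicking network is defined purely through its terminal cut function and is insensitive to how it is produced, the stitching procedure of \cite{ChaudhuriSWZ00} applies verbatim after substituting the smaller local pieces; summing over the $O(k)$ subproblems yields the claimed total bound of $k\cdot 2^{\binom{3(t+1)}{\lfloor 3(t+1)/2\rfloor}}$.

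The only thing that actually needs to be checked is that the recursion/stitching of \cite{ChaudhuriSWZ00} uses the \emph{existence} of a local mimicking network of the stated size as a black box, rather than any specific internal feature of Hagerup's construction. Inspecting their proof confirms this modular structure, so no new combinatorial argument is required and the corollary follows immediately from Theorem \ref{thm:ccut}.
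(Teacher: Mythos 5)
Your argument is exactly the one the paper intends: the paper offers no detailed proof of this corollary, stating only that the observations of the section combined with the bounded-treewidth construction of \cite{ChaudhuriSWZ00} yield the bound, i.e.\ precisely your ``swap the Hagerup et al.\ black box for Theorem \ref{thm:ccut} on the $3(t+1)$ effective terminals of each subproblem'' step. Your write-up is, if anything, more careful than the paper's, and the modularity check you describe is the right thing to verify.
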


\subsection{Contraction-Based Mimicking Networks}
Here we will show that on every graph $G$ that has unique
minimum terminal cuts, Algorithm
\ref{alg:Exact-Cut-Sparsifier} produces a mimicking network that is
optimal among all contraction-based mimicking networks.

\begin{theorem}
\textit{(Restatement of Theorem \ref{thm:loweRestrict})}
Let $G$ be a graph with unique minimum terminal cuts. Then the
mimicking network constructed using Algorithm
\ref{alg:Exact-Cut-Sparsifier} is optimal among contraction-based
mimicking networks for $G$ i.e., it has minimum number of vertices
among all contraction-based mimicking networks.
\end{theorem}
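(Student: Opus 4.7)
The plan is to argue that the clustering produced by Algorithm \ref{alg:Exact-Cut-Sparsifier} is the coarsest possible among all contraction-based mimicking networks, so it yields the smallest number of vertices. Concretely, the algorithm merges $u,v \in V$ iff they lie on the same side of every minimum terminal cut $(S(U), V \setminus S(U))$. I will show that any contraction-based mimicking network must already respect this equivalence: whenever some minimum terminal cut separates $u$ from $v$ in $G$, the contraction map must assign $u$ and $v$ to different vertices of $H$.

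The key lemma I would prove is the following. Let $H = (V_H, E_H)$ be any contraction-based mimicking network for $G$, arising from a map $f \colon V \to V_H$, and fix $U \subset K$. Consider the (unique, by hypothesis) minimum cut in $H$ separating $U$ from $K \setminus U$, say $(T, V_H \setminus T)$ with $U \subseteq T$, and let $A = f^{-1}(T) \subseteq V$. The standard contraction relation $c_H(y,z) = \sum_{f(u)=y, f(v)=z} c(u,v)$ together with the observation that an edge $(u,v) \in E$ crosses $(A, V\setminus A)$ in $G$ if and only if $(f(u), f(v))$ crosses $(T, V_H \setminus T)$ in $H$, gives $h_G(A) = h_H(T)$. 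Since $H$ is a mimicking network, $h_H(T) = h_K^G(U)$, so $(A, V \setminus A)$ is a minimum cut in $G$ separating $U$ from $K \setminus U$. By the uniqueness assumption, $A = S(U)$.

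Now suppose for contradiction that some $u,v \in V$ are separated by the minimum terminal cut of some $U \subset K$, i.e., $u \in S(U)$ and $v \notin S(U)$, yet $f(u) = f(v)$ in some contraction-based mimicking network $H$. Apply the lemma above: the pullback $A = f^{-1}(T)$ must equal $S(U)$. But $f(u) = f(v)$ forces $u$ and $v$ to lie together in $A$ or together in $V \setminus A = V \setminus S(U)$, contradicting $u \in S(U)$, $v \notin S(U)$. Hence in every contraction-based mimicking network, $u$ and $v$ must be contracted to distinct vertices whenever some minimum terminal cut separates them in $G$.

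This immediately finishes the proof: the partition induced by $f$ is a refinement of the partition produced by Algorithm \ref{alg:Exact-Cut-Sparsifier}, so $|V_H|$ is at least the number of nonempty clusters produced by the algorithm. The only part that requires care is the computation $h_G(A) = h_H(T)$, where one must verify that no edge of $G$ is ``lost'' to a self-loop at the boundary; this follows because $T$ and $V_H \setminus T$ are disjoint, so any edge $(u,v) \in E$ with $u \in A$, $v \notin A$ has $f(u) \ne f(v)$ and therefore contributes to $\delta_H(T)$. I do not anticipate any deeper obstacle; the uniqueness of minimum terminal cuts is the only non-trivial hypothesis, and it is used in exactly one place, namely to conclude $A = S(U)$ from the pullback.
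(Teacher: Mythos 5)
Your proposal is correct and follows essentially the same route as the paper: pull the minimum $U$-cut of $H$ back to $G$ via the contraction map, use the uniqueness hypothesis to identify the pullback with $S(U)$, and conclude that no contraction-based mimicking network can merge two vertices separated by some minimum terminal cut, so its partition refines the one produced by Algorithm \ref{alg:Exact-Cut-Sparsifier}. If anything, your version is slightly tighter, since you run the argument for arbitrary vertex pairs $u,v$ rather than only for edges of $G$ as the paper does, which cleanly closes the final step from ``separated endpoints of cut edges'' to the comparison of cluster counts.
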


\begin{proof}
Let $H$ be the contraction-based mimicking network for graph $G$ with terminal set $K$ constructed using function $\phi: V(G) \rightarrow V_H$ in Algorithm \ref{alg:Exact-Cut-Sparsifier}.
First, we claim that all edges in $H$ belong to some minimum terminal cut in $G$.
\begin{claim}
\label{claim:edge}
For all edges $(y,z) \in G$, $\phi(y) \neq \phi(z)$ if and only if
$(y,z) \in h_K^G(U)$ for some $U \subset K$.
\end{claim}
\begin{proof}
The claim is clear from the construction presented in Algorithm
\ref{alg:Exact-Cut-Sparsifier}.  Two vertices are merged if and only
if the edge between them does not belong to any minimum cut.
\end{proof}

Assume $H'$ is the optimal contraction-based mimicking network with
minimum number of vertices, i.e., 
$|V(H')| \le |V_H|$.  Since $H'$ is contraction-based, it is
defined by a function $\phi' : V(G) \rightarrow V_{H'}$.

\begin{claim}
\label{claim:edge1}
For all edges $(y,z) \in G$,  if $(y,z) \in h_K^G(U)$ for some $U
\subset K$ then  $\phi'(y) \neq \phi'(z)$
\end{claim}
\begin{proof}
Consider an $e = (y,z)$ in the original graph $G$, that
belongs to some minimum terminal cut $(U,K-U)$.  We claim that the
clusters containing $y$ and $z$ are distinct in $H'$. 

By definition of $H'$, the minimum cut $h_{K}^{H'}(U)$ has
the same value as the minimum terminal cut $h_K^G(U)$.  Since all
minimum terminal cuts in $G$ are unique, this implies that the cut
induced by $h_K^{H'}(U)$ in $G$ is exactly the same as $h_K^G(U)$.
Therefore, for every edge $(y,z)$ in the graph $G$ that belongs
to a minimum terminal cut $h_K^G(U)$, the corresponding clusters in
$H'$ are distinct.

\end{proof}
From the previous two claims, $\phi(y) \neq \phi(z) \implies \phi'(y)
\neq \phi'(z)$ for every edge $(y,z) \in G$.  This implies that the
number of clusters in $H$ is at most the number of clusters in $H$.
\end{proof}

\section{Exponential Lower bound}

In this section we will exhibit lower bounds on the size of mimicking
networks using a subtle rank argument.  Fix $p =
2^{k-1}-1$ for the remainder of the section.

\begin{definition}
A \textit{minimum terminal cut vector(MTCV)} $m^{G,K}$ for graph $G$ with terminal set $K$ is a $p$-dimensional vector where $i$'th coordinate  $m_i^{G,K}=h_K^G(U_i)$ i.e., it corresponds to the value of terminal cut separating $i$'th subsets of terminals from rest of the terminals for $i \in \{1, 2, \cdots p(=2^{k-1}-1)\}$. 
\end{definition}
Let $M_k$ be the set of all possible minimum terminal cut vectors with
$k$ terminals.  Not all vectors $v
\in \mathbb{R}^{2^{k-1}-1}$ can be minimum terminal cut vectors.  The
submodularity of the cut function introduces constraints on the
coordinates of the minimum terminal cut vector.
For example there are 3 possible terminal cuts for graphs with terminal set size 3. However  \textbf{[0.1, 0.1, 0.8]}  is not a valid MTCV. 
 First we prove that these minimum terminal cut vectors form a convex set.
\begin{lemma}
\label{lemma:cutcomb}
$M_k$ is a convex cone in $\mathbb{R}^{2^{k-1}-1}$.
\end{lemma}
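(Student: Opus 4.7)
The plan is to verify the two defining properties of a convex cone: closure under nonnegative scaling, and closure under addition. (Convexity then follows immediately: if $v,w\in M_k$ and $\lambda\in[0,1]$, then $\lambda v,(1-\lambda)w\in M_k$ by scaling, and $\lambda v + (1-\lambda)w\in M_k$ by closure under addition.)

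For closure under nonnegative scaling, I would observe that if $G=(V,E)$ with capacity function $c$ realizes the MTCV $m^{G,K}=v$, then for any $\lambda\geq 0$ the graph $G_\lambda=(V,E)$ equipped with the scaled capacity function $\lambda c$ has the property that for every $U\subset K$,
\begin{equation}
h_{G_\lambda}(A)=\lambda\, h_G(A)\quad\text{for every }A\subseteq V,
\end{equation}
so taking minima over all $A$ with $A\cap K=U$ yields $h_K^{G_\lambda}(U)=\lambda\, h_K^G(U)$. Hence $m^{G_\lambda,K}=\lambda v\in M_k$.

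For closure under addition, given graphs $G_1=(V_1,E_1,c_1)$ and $G_2=(V_2,E_2,c_2)$ both with terminal set $K$ (with $V_1\cap V_2=K$ arranged by relabeling non-terminal vertices), I would form the graph $G=(V_1\cup V_2,\, E_1\sqcup E_2)$ where the terminals are identified and the capacity of each edge is inherited from its source graph. For any $U\subset K$ and any $A\subseteq V_1\cup V_2$ with $A\cap K=U$, the edges of $\delta_G(A)$ split into those in $E_1$ and those in $E_2$, giving
\begin{equation}
h_G(A)=h_{G_1}(A\cap V_1)+h_{G_2}(A\cap V_2),
\end{equation}
and both $A\cap V_1$ and $A\cap V_2$ separate $U$ from $K\setminus U$ in their respective graphs. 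Minimizing each summand independently is achievable (choose the optimal $A_i$ in each $G_i$ and glue along $K$), so $h_K^G(U)=h_K^{G_1}(U)+h_K^{G_2}(U)$, i.e., $m^{G,K}=v+w\in M_k$.

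There is no real obstacle here; the only thing to be careful about is the gluing in the addition step: one must justify that the minima decouple, which relies on the fact that once the terminals are fixed to lie on the prescribed sides, the two graphs share no non-terminal vertices and therefore the choice of $A\cap V_1$ is independent of the choice of $A\cap V_2$. Writing this out carefully gives both directions of the equality $h_K^G(U)=h_K^{G_1}(U)+h_K^{G_2}(U)$, completing the proof that $M_k$ is a convex cone.
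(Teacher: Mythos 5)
Your proposal is correct and follows essentially the same route as the paper: the paper also reduces to closure under scaling plus a gluing construction (it forms the convex combination directly by taking the disjoint union of the non-terminal sets, identifying the terminals, and adding $\lambda_i$-scaled capacities), which is exactly your scaling-plus-addition argument packaged in one step. Your version is slightly more complete in that you explicitly justify why the minimum terminal cut function of the glued graph decouples into the sum of the two terminal cut functions, a step the paper asserts without proof.
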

\begin{proof}
Note that by scaling the edges of a graph $G$, the corresponding
minimum terminal cut vector also scales.  Therefore, it is sufficient
to show the convexity of the set $M_k$.

Let $G_1$ and $G_2$ be graphs with terminal set $K$ of size $k$.   Let $N_1$ and $N_2$ be their set of non-terminals respectively i.e., $N_i \cup K= V(G_i)$ for $i=1,2$. Note that these graphs might have different edge weights or different number of vertices. So depending on the edge values minimum terminal cuts will have different values. 
Let us assume $t_1$ and  $t_2 $ be the minimum terminal cut vectors for graphs  $G_1$ and $G_2$   with  same terminal set $K$ and non negative edge cost functions  $\mathcal{C}_1$ and  $\mathcal{C}_2$ respectively. 
We claim that for any nonnegative  $\lambda_1, \lambda_2$ such that $ \lambda_1 + \lambda_2=1 $, there exists a graph $H$ with same terminal set $K$ and edge cost function $\mathcal{C}'$ such that its minimum terminal cut vector $t' = \lambda_1 t_1 + \lambda_2 t_2$.
We create $H$ with nonterminals $N_1 \cup N_2$. We start with all edge costs in $H$ to be 0. 
Then for $i=1$ and $2$, for all edges $(u,v) \in G_i$, we increase the cost of edge $(u,v)$ in $H$ by $\lambda_i \mathcal{C}_i(u,v)$. The final graph has a minimum terminal cut vector of value $\sum _{i=1}^2 \lambda_i t_i$.
\end{proof}

Now we show the central lemma regarding the range of the minimum terminal cut vectors.
\begin{lemma}
\label{lemma:cutrange}
The set $M_k$ has  nonzero volume.
\end{lemma}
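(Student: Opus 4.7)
The plan is to leverage the convex cone structure established in \lemref{lemma:cutcomb}: once we exhibit $p = 2^{k-1}-1$ minimum terminal cut vectors in $M_k$ that are linearly independent over $\mathbb{R}$, their non-negative span is a $p$-dimensional simplicial cone inside $M_k$, forcing $M_k$ to have nonzero Lebesgue measure in $\mathbb{R}^p$.

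To build a sufficiently rich family of MTCVs, for each $S\subseteq K$ with $|S|\ge 2$ I would use the \emph{star gadget} $G_S$ consisting of a single Steiner vertex $s$ joined by unit-capacity edges to every terminal in $S$. Optimizing over which side of any separating cut $s$ lies on gives
\[
 h_K^{G_S}(U) \;=\; \min\bigl(|U\cap S|,\; |S\setminus U|\bigr)
 \qquad \text{for every } U\subseteq K \text{ with } v_k\notin U.
\]
Because the Steiner vertex of each $G_S$ is private (only the terminals are shared), the minimum cut in the disjoint union $\bigcup_{|S|\ge 2} G_S$ decomposes into an independent minimization per gadget, so the MTCV of the union equals the sum of the individual vectors $\mathbf{v}_S := m^{G_S,K}$; combined with \lemref{lemma:cutcomb}, every non-negative combination of the $\mathbf{v}_S$ lies in $M_k$.

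The heart of the argument is then to show that some $p$ of the vectors $\{\mathbf{v}_S\}_{|S|\ge 2}$ are linearly independent in $\mathbb{R}^p$. I would verify this directly for small $k$ (for $k=3$ the three pair-gadgets give $(1,1,0),\ (1,0,1),\ (0,1,1)$, whose determinant is $-2$) and then search for a triangular structure in the general matrix $A_{S,U}=\min(|U\cap S|,|S\setminus U|)$ after ordering subsets $S$ by size refined lexicographically. The main obstacle will be carrying out this rank computation cleanly for every $k$. A clean alternative, in case the combinatorics becomes unwieldy, is to take a graph $G_0$ with unique minimum terminal cuts (in the sense of \thref{thm:loweRestrict}) and generic edge capacities, observe that the MTCV is then a locally linear function of the capacity vector, and show the Jacobian of (capacities)$\,\to\,$MTCV has rank $p$ because for each of the $p$ distinct minimum cuts one can exhibit an edge lying in that cut and in no other, producing $p$ independent perturbation directions and hence an open neighborhood of $m^{G_0,K}$ inside $M_k$.
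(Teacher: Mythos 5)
Your overall strategy (exhibit a family of MTCVs spanning $\mathbb{R}^p$ and invoke the cone structure from \lemref{lemma:cutcomb}) is sound, and the star-gadget computation $h_K^{G_S}(U)=\min(|U\cap S|,|S\setminus U|)$ together with the decomposition over the disjoint union is correct. But the proof has a genuine gap exactly where you flag one: you never establish that $p=2^{k-1}-1$ of the vectors $\mathbf{v}_S$ are linearly independent for general $k$. Checking $k=3$ and hoping for "a triangular structure" in the matrix $A_{S,U}=\min(|U\cap S|,|S\setminus U|)$ is not a proof, and that rank statement is the entire content of the lemma. Your fallback is no better off: to make the Jacobian argument work you must exhibit a single graph $G_0$ in which each of the $p$ minimum terminal cuts contains an edge lying in no other minimum terminal cut, and you do not construct such a graph; doing so is essentially as hard as the original task.

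The paper sidesteps the rank computation entirely by exploiting convexity more cleverly: it suffices to show that for \emph{each} coordinate $i$ separately, $M_k$ contains a segment parallel to $e_i$ (a convex set containing a nondegenerate segment in every coordinate direction cannot lie in any hyperplane, hence is full-dimensional). For direction $i$ it builds a two-hub graph: all terminals of $K-U_i$ attached to a hub $u_0$ with capacity $1/|K-U_i|$, all terminals of $U_i$ attached to a hub $v_0$ with capacity $1/|U_i|$, and a bridge edge $(u_0,v_0)$ of capacity $1-\epsilon$. For small $\epsilon>0$ the cut $(U_i,K-U_i)$ uses only the bridge and has value $1-\epsilon$, while every other terminal cut avoids the bridge and is unaffected by $\epsilon$; varying $\epsilon$ therefore moves only the $i$-th coordinate. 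If you want to complete your version, you either need to prove the rank-$p$ claim for the matrix $A$ (nontrivial combinatorics), or replace your gadgets with per-coordinate perturbation gadgets of this kind, at which point you have reproduced the paper's argument.
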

\begin{proof}
The \textbf{0} vector is MTCV for a completely disconnected graph. 
For each $i \in \{1,\ldots,2^{k-1}-1\}$, we will show that a line segment
in the $i^{th}$ direction belongs to $M_k$. 
By the convexity of the set $M_k$ (lemma \ref{lemma:cutcomb}) this
will imply that the set $M_k$ has nonzero volume, i.e., full
dimensional.

To demonstrate a line segment along direction $i \in \{1,\ldots
2^{k-1}-1\}$, we will show that there
exist two MTCVs which differ only in $i$'th coordinate and same in all
other $p-1$ coordinates.  Fix a subset $U_i$ of terminals.  To
construct MTCVs that differ only on the $i^{th}$ coordinate, construct a graph $H_i$ for terminal sets $U_i$ as shown in Fig. 1. 
\begin{figure}[h!]
\label{figure:graph}
  \centering
      \includegraphics[width=0.5\textwidth]{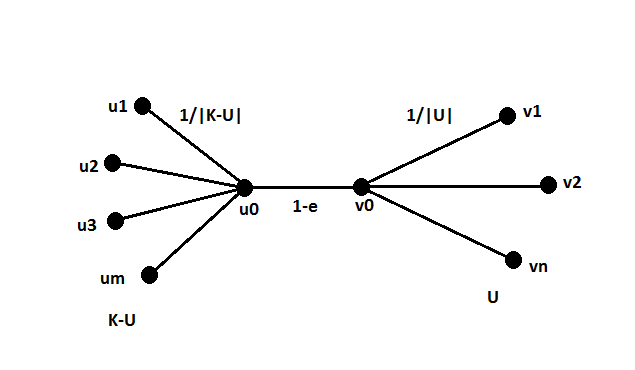}
  \caption{Graph corresponding to terminal cut $[U, K_U]$}
\end{figure}
Add all terminals in $K-U_i$ to a non-terminal $u_0$ with edge costs $1/|K-U_i|$. Add all terminals in $U_i$ to another non-terminal $v_0$ with edge costs $1/|U_i|$. Put an edge between $u_0$ and $v_0$ with edge cost $1 -\epsilon$ where $0<\epsilon< min \{1/|U_i|, 1/|K-U_i \}$. So, value of minimum terminal cut separating $U_i$ from $K-U_i$ is $1 -\epsilon$ and it contains only the edge $(u_0, v_0)$.  All other terminal cuts have value $\le 1$ and does not  contain the edge $(u_0, v_0)$.
So, we can change value of $\epsilon$ between 0 and $min \{1/|U_i|,
1/|K-U_i \}$ to obtain a line segment contained in $M_k$ along
direction $i$.

\end{proof}
\begin{definition}
For a given graph $G$ with terminal set $K$, the cut matrix $S_G$ is a $p \times |E(G)|$ matrix  where $S_{ij}=1$ if edge $e_j \in h_K^G(U_i)$ and 0 otherwise.
\end{definition}

\begin{theorem}
\textit{(Restatement of Theorem \ref{thm:lower})}
There exists graphs $G$ for which every mimicking network has size
at least $2^{(k-1)/2}$. 
\end{theorem}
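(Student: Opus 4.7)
The plan is to compare the dimensions of the two sets of achievable MTCVs. By \lemref{lemma:cutrange}, $M_{k}$ has positive Lebesgue measure in $\mathbb{R}^{p}$ with $p=2^{k-1}-1$. I will show that the set of MTCVs realizable by mimicking networks on at most $n$ vertices has Lebesgue measure zero in $\mathbb{R}^{p}$ whenever $n<2^{(k-1)/2}$, so some vector $t\in M_{k}$ is unreachable by any such network. By definition of $M_{k}$, there is then a graph $G$ with MTCV equal to $t$, and this $G$ witnesses the claimed lower bound.

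To bound the achievable image, fix a vertex set $W$ of size $n$ containing $K$ and observe that every mimicking network on at most $n$ vertices is realized by specifying a nonnegative weight on each edge of the complete graph on $W$ (an absent edge corresponds to weight $0$, and extra vertices may be left isolated). The space of candidate networks is thus $\mathbb{R}_{\geq 0}^{\binom{n}{2}}$. Define the map $\Phi:\mathbb{R}_{\geq 0}^{\binom{n}{2}}\to\mathbb{R}^{p}$ sending an edge-weight vector $c$ to its MTCV. Each coordinate of $\Phi$ is a pointwise minimum of finitely many linear functions of $c$---one linear function per subset $A\subseteq W$ with $A\cap K=U_{i}$---hence concave and piecewise linear. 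This induces a finite polyhedral subdivision of $\mathbb{R}_{\geq 0}^{\binom{n}{2}}$ on each cell $\sigma$ of which $\Phi$ agrees with a linear map $c\mapsto S_{\sigma}\,c$, where $S_{\sigma}\in\{0,1\}^{p\times\binom{n}{2}}$ is the cut matrix recording the realizing minimum cuts on $\sigma$.

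The image of each cell sits in a linear subspace of dimension $\operatorname{rank}(S_{\sigma})\leq\binom{n}{2}$, so $\Phi\bigl(\mathbb{R}_{\geq 0}^{\binom{n}{2}}\bigr)$ is a finite union of polyhedral sets, each of dimension at most $\binom{n}{2}$. Whenever $n<2^{(k-1)/2}$ we have $\binom{n}{2}<n^{2}/2<2^{k-2}<2^{k-1}-1=p$ (the final strict inequality for $k\geq 3$; smaller $k$ are trivial since any mimicking network already has at least $k\geq 2^{(k-1)/2}$ vertices), so the union has Lebesgue measure zero in $\mathbb{R}^{p}$ and \lemref{lemma:cutrange} supplies an MTCV outside it. The main obstacle is to justify rigorously the piecewise-linear description of $\Phi$ and to confirm that a linear image of a polyhedron has dimension at most the rank of the linear map; once these are in hand, the edge-counting estimate $\binom{n}{2}<p$ yields the lower bound essentially for free.
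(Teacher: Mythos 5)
Your proposal is correct and follows essentially the same route as the paper: both arguments observe that mimicking networks on few vertices realize MTCVs only through finitely many cut matrices, each a linear map from $\mathbb{R}^{\binom{t}{2}}$, so the achievable set is a finite union of images of dimension at most $\binom{t}{2}$, which cannot cover the full-dimensional cone $M_k$ of \lemref{lemma:cutrange} unless $\binom{t}{2}\ge 2^{k-1}-1$. Your polyhedral-subdivision framing just makes explicit the piecewise-linearity that the paper leaves implicit when it asserts $M_k$ lies in the union of the ranges of the $S_i$.
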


\begin{proof}
Suppose every graph $G$ with $k$ terminals has a mimicking network
with $t$ vertices.

Consider a mimicking network $H$ with $t$ vertices for a graph $G$
with $k$ terminals.  There are
$2^{t}-1$ possible cuts in the graph $H$.  Therefore, there are at
most $(2^{t}-1)^p$ different cut matrices $S_H$ of $H$.  The specific
cut matrix $S_H$ depends on the weights of the edges in $H$.

Let us refer to these matrices as $S_1, S_2,\ldots, S_{(2^{t}-1)^p}$.
Each matrix $S_i$ can be thought of as a linear map $S_i :
\mathbb{R}^{\binom{t}{2}} \to \mathbb{R}^{2^{k-1}-1}$.
For every graph $G$, there exists a choice of weights $w_{ij}$ for the
edges of $H$, and a choice of cut matrix $S_{\ell}$ (determined by the
weights), such that $S_\ell w$ is equal to the minimum terminal cut
vector $h_{K}^G$ of the graph $G$.  Therefore, the set $M_k$ of all
MTCVs is in the union of the ranges of the linear maps $\{S_i\}_{i
=1}^{(2^{t}-1)^p}$.

However, since $M_k$ has non-zero volume (is of full dimension), at least one of the linear maps
$S_i$ must have rank $ = 2^{k-1}-1$.  Therefore $\binom{t}{2} \geq 2^{k-1}-1$
implies that $t \geq 2^{(k-1)/2}$.

\end{proof}

\begin{corollary}
There exists graphs $G$ for which every cut sparsifier that preserves $C$ minimum terminal cuts exactly has size
at least $|C|^{1/2}$. 
\end{corollary}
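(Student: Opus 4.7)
The plan is to transcribe the rank argument from the proof of \thref{thm:lower} verbatim, restricting attention to the coordinates indexed by $C$. First I would pick $k$ large enough that the collection $C$ can be realized as $|C|$ of the $2^{k-1}-1$ nonempty terminal bipartitions on some $k$-terminal graph, and let $\pi_C\colon\mathbb{R}^{2^{k-1}-1}\to\mathbb{R}^{|C|}$ denote the corresponding coordinate projection. The set of value-vectors that a $C$-preserving sparsifier must realize over all graphs on this terminal set is exactly $\pi_C(M_k)$.

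Next I would verify that $\pi_C(M_k)$ is full-dimensional in $\mathbb{R}^{|C|}$. By \lemref{lemma:cutrange}, $M_k$ has nonzero $(2^{k-1}-1)$-dimensional volume, and $\pi_C$ is a surjective linear map, so the image $\pi_C(M_k)$ has positive $|C|$-dimensional measure. Equivalently, the line-segment construction in the proof of \lemref{lemma:cutrange} already produces a nontrivial segment along each coordinate direction individually; restricting to the directions lying in $C$ and invoking the cone property \lemref{lemma:cutcomb} yields full-dimensionality directly.

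Then I would repeat the counting/rank argument with $p=2^{k-1}-1$ replaced by $|C|$. A cut sparsifier $H$ with $t$ vertices has at most $2^t-1$ distinct vertex bipartitions, so as the edge weights of $H$ vary the restricted cut matrix $S_H^C$ (of size $|C|\times\binom{t}{2}$, with rows indexed by the cuts in $C$) takes at most $(2^t-1)^{|C|}$ combinatorial values. Each such matrix defines a linear map $\mathbb{R}^{\binom{t}{2}}\to\mathbb{R}^{|C|}$, and the hypothesis that every graph on this terminal set admits a $C$-preserving sparsifier on $t$ vertices forces the union of the images of these finitely many linear maps to cover $\pi_C(M_k)$.

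Finally, since a finite union of proper subspaces of $\mathbb{R}^{|C|}$ has $|C|$-dimensional measure zero while $\pi_C(M_k)$ has positive $|C|$-dimensional measure, some matrix must have rank $|C|$. This forces $\binom{t}{2}\geq|C|$, and hence $t\geq|C|^{1/2}$. The only new ingredient beyond the proof of \thref{thm:lower} is the full-dimensionality of the projection, and that step is essentially free from \lemref{lemma:cutrange}; consequently I anticipate no serious obstacle, only bookkeeping to lift the argument from the full vector of all terminal cuts to an arbitrary sub-collection $C$.
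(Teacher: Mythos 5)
Your proposal is correct and follows essentially the same route the paper intends: the corollary is an immediate specialization of the rank argument in the proof of \thref{thm:lower}, with $p=2^{k-1}-1$ replaced by $|C|$, and your observation that \lemref{lemma:cutrange} already yields a segment in each individual coordinate direction (hence full-dimensionality of the projection $\pi_C(M_k)$) is exactly the needed adaptation. No gaps.
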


As the graph constructed in the theorem \ref{thm:lower} has tree-width $(k+1)$, we get the following corollary.
\begin{corollary}
There exists graphs $G$ with treewidth $\ge (k+1)$ for which every
mimicking network has size
at least $2^{(k-1)/2}$. 
\end{corollary}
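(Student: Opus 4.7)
The plan is to prove the lower bound by a dimension-counting argument: show that a mimicking network on $t$ vertices can realize only a limited-dimensional family of minimum terminal cut vectors (MTCVs), and compare against the full-dimensional set $M_k$ established in \lemref{lemma:cutrange}.

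First, I would fix a hypothetical bound: assume that for every $k$-terminal graph $G$ there is a mimicking network on at most $t$ vertices. For any such mimicking network $H$, observe that once the combinatorial identity of the minimum terminal cuts of $H$ is fixed, the MTCV of $H$ depends linearly on the $\binom{t}{2}$ edge weights; this linearity is exactly what the cut matrix $S_H$ encodes, and the MTCV is simply $S_H w$, where $w$ is the vector of edge weights. The number of distinct possible cut matrices is at most $(2^t - 1)^p$ where $p = 2^{k-1}-1$, because each row is the indicator of some nonempty subset of the $t$ vertices (more precisely, of the edges of a vertex bipartition, but the cruder count suffices).

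Next, I would enumerate the finitely many possible cut matrices as $S_1,\ldots,S_N$ and view each as a linear map $S_i : \mathbb{R}^{\binom{t}{2}} \to \mathbb{R}^{p}$. The key observation is that every MTCV realizable by $G$ must equal $S_i w$ for some index $i$ and some nonnegative weight vector $w$, by the assumed existence of the mimicking network. Hence
\begin{equation}
M_k \;\subseteq\; \bigcup_{i=1}^{N} \mathrm{Image}(S_i). \nonumber
\end{equation}
By \lemref{lemma:cutrange}, $M_k$ has nonzero $p$-dimensional volume. A finite union of proper linear subspaces of $\mathbb{R}^p$ has Lebesgue measure zero, so at least one $\mathrm{Image}(S_i)$ must equal all of $\mathbb{R}^p$, which forces $\mathrm{rank}(S_i) = p = 2^{k-1}-1$. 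Since $S_i$ has $\binom{t}{2}$ columns, this yields $\binom{t}{2} \geq 2^{k-1}-1$, and hence $t \geq 2^{(k-1)/2}$.

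The main obstacle, and the real content of the argument, is justifying that the set $M_k$ of MTCVs is genuinely full-dimensional rather than living on some low-dimensional subvariety; this is already handled by \lemref{lemma:cutrange} via the explicit gadget graphs $H_i$ that trace out line segments in each coordinate direction, combined with the convex cone property of \lemref{lemma:cutcomb}. With that in hand, the rest of the proof is essentially the pigeonhole/measure-theoretic step above, and rearranging $\binom{t}{2} \geq 2^{k-1} - 1$ into the claimed bound on $t$.
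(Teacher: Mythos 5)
Your dimension-counting argument is correct, but it is a proof of \thref{thm:lower}, not of the corollary you were asked to prove. The only content the corollary adds to \thref{thm:lower} is the claim about the treewidth of the witnessing graphs, and your proposal never mentions treewidth at all. The rank argument is non-constructive in the sense that it only tells you \emph{some} MTCV in $M_k$ fails to lie in any $\mathrm{Image}(S_i)$ of rank $<p$; to get the corollary you must identify a concrete family of graphs realizing the relevant portion of $M_k$ and bound their treewidth. The paper's (one-line) justification is that the graphs generating the full-dimensional set $M_k$ are exactly the convex combinations, via \lemref{lemma:cutcomb}, of the gadgets $H_i$ from \lemref{lemma:cutrange}: every non-terminal in such a combination is one of the vertices $u_0, v_0$ of some gadget, each adjacent only to terminals and to its partner. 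Hence a tree decomposition placing all $k$ terminals in every bag and adding one bag per gadget pair $\{u_0,v_0\}$ has bags of size $k+2$, and the paper asserts the resulting treewidth is $(k+1)$. So the missing step is: (i) observe that the graph $G$ whose MTCV escapes every low-rank $\mathrm{Image}(S_i)$ can be chosen from this explicit gadget-combination family, and (ii) verify the stated treewidth for that family. Without (i) and (ii) you have only re-proved the theorem the corollary is derived from.
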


\clearpage

\bibliographystyle{alpha}
\bibliography{bib-sparse}

\appendix

\section{Improved Constructions for Special Classes of Graphs}

\subsection{Trees}
\begin{theorem}
\label{thm:uppertree}
Given an undirected, capacitated tree $T=(V,E)$ and a set $K \subset
V$ of terminals of size $k$, we can construct a mimicking network
$T_H=(V_H,E_H)$ for which the cut-function exactly approximates the
value of {\em{every}} minimum cut separating any subset $U$ of
terminals from the remaining terminals $K-U$ where $|V_H| \le 2k-1$
and this is tight for contraction-based mimicking networks. We can
also create an outerplanar mimicking network which has at most $\frac{13k}{8}-\frac{3}{2}$ vertices.

\end{theorem}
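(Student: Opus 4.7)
The plan is to obtain a contraction-based tree mimicking network of size at most $2k-1$ via two local cut-preserving reductions, to argue tightness using Theorem~\ref{thm:loweRestrict}, and then to refine the construction into an outerplanar graph on at most $\tfrac{13k}{8}-\tfrac{3}{2}$ vertices by compressing small Steiner sub-structures.

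I would define two reductions on the input tree $T$ with terminal set $K$: (R1) delete every non-terminal leaf, and (R2) if a non-terminal vertex $u$ has exactly two neighbours $v,w$ with edge weights $a$ and $b$, delete $u$ and insert a single edge $vw$ of weight $\min(a,b)$. Reduction (R1) preserves every $h_K^T(U)$ because a non-terminal leaf can be placed on either side of any cut at zero cost. For reduction (R2) the verification is a brief case analysis: in any bipartition of $K$, the optimal placement of $u$ contributes $\min(a,b)$ to the cut when $v,w$ are separated and $0$ otherwise, matching precisely the contribution of the replacement edge $vw$ of weight $\min(a,b)$. Applying (R1)--(R2) exhaustively produces a tree $T'$ in which every leaf is a terminal and every non-terminal has degree at least three.

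To bound $|V(T')|$, let $L$, $I_t$, $I_n$ count the leaves, internal terminals, and internal non-terminals of $T'$ respectively. Since leaves have degree $1$, internal terminals have degree at least $2$, and internal non-terminals have degree at least $3$, the handshake identity gives
\begin{equation}
2(L+I_t+I_n-1)\;\ge\;L+2I_t+3I_n,\nonumber
\end{equation}
which rearranges to $I_n\le L-2$. Combined with $L+I_t=k$, this yields $|V(T')|\le 2k-2\le 2k-1$, as required. For tightness among contraction-based mimicking networks, I would choose a tree whose minimum terminal cuts are all unique so that Theorem~\ref{thm:loweRestrict} applies; a caterpillar whose backbone consists of $k-2$ degree-three Steiner vertices each carrying a distinct pendant terminal, with generic edge weights forcing every minimum terminal cut to pick out a distinct edge, shows that no smaller contraction-based mimicking network exists on that instance.

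For the outerplanar bound I would post-process $T'$ by partitioning it into disjoint sub-trees, each spanning at most eight terminal leaves, and replacing each such piece with an outerplanar gadget on the same terminals and fewer Steiner vertices while preserving every induced terminal-cut value. Concretely, merging adjacent degree-three Steiner nodes of compatible cut structure into a small cycle on their collective terminal boundary saves roughly one Steiner vertex per eight terminals; summing over the decomposition produces the claimed $|V_H|\le \tfrac{13k}{8}-\tfrac{3}{2}$. The principal obstacle is the design and correctness of the block gadget: one must show that a constant-sized outerplanar graph on $m\le 8$ terminals can realise the terminal-cut function inherited from its sub-tree of $T'$, which I expect to resolve through a finite case analysis over the possible sub-trees of bounded size, after which the global bound is a matter of aggregating gadget sizes.
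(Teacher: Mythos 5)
Your reductions (R1)--(R2) and the handshake count giving $|V(T')|\le 2k-2$ are exactly the paper's argument, and your tightness example (a tree with unique minimum terminal cuts in which every edge lies in some minimum terminal cut, so no edge can be contracted) is the same idea as the paper's $3$-regular uniform-cost tree. Those parts are fine.

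The outerplanar bound is where there is a genuine gap. You defer the entire construction to an unspecified ``block gadget'' on up to eight terminals whose existence and correctness you propose to establish later by finite case analysis; that is precisely the content of the claim, so nothing has been proved. The paper's actual mechanism is concrete: make $T'$ $3$-regular with zero-cost edges, then repeatedly apply a $(Y$-$\Delta)$-transformation that replaces a degree-$3$ non-terminal $x$ with neighbours $u,v,w$ by a triangle with weights $\frac{c(u,x)+c(v,x)-c(w,x)}{2}$, etc.\ (an exact, checkable cut-preserving move that keeps the edge count fixed and yields a cactus, hence an outerplanar graph), applied along an in-order traversal; the bound then follows from a degree count, $4(n-k)+2\cdot\frac{k}{2}+\frac{k}{2}\le 2(2k-3)$, using that at most $\lfloor k/2\rfloor$ terminals remain leaves. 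Separately, your accounting does not reach the target even if the gadget existed: starting from $2k-2$ vertices you must remove about $\frac{3k}{8}$ Steiner vertices to get down to $\frac{13k}{8}-\frac{3}{2}$, i.e.\ roughly three vertices per eight terminals, whereas you claim a saving of ``one Steiner vertex per eight terminals,'' which only gives about $\frac{15k}{8}-2$. So both the construction and the arithmetic of the third part need to be supplied.
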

\begin{proof}
Let $H'$ be the smallest sized mimicking network. We can assume each non-leaf non-terminal vertex in $H'$ has degree at least 3. Otherwise, if nonterminal vertex $v$ is a degree 2 vertex with neighbor $u$ and $w$, then we can delete $v$ and add an edge $(u,w)$ with cost $min(c(v,u), c(v,w))$ to preserve the minimum terminal cuts. In other words, we can contract the minimum capacity edge among $(v,u)$ and $(v,w)$. Similarly if a nonterminal is a leaf, we can delete that nonterminal as it does not affect any minimum terminal cuts. 
Therefore, finally the tree $T'$ contains only terminals as leaves and each non-leaf vertex has degree at least 3. So at most there are $(2k-2)$ vertices.

To show this is tight for contraction-based mimicking network, consider a 3-regular tree with uniform edge costs and leaves as terminals. Each edge $e$ is  in at least one unique minimum terminal cut $C_e$. To preserve cut $C_e$, we can not contract $e$. Thus we need at least $(2k-3)$ edges in this case.

Now we add appropriate 0-cost edges(if needed) in $T'$ to make the tree 3-regular and set of terminals as set of leaves. We call this tree $T$.
We can rearrange the tree such that for any node $v$ in tree $T$, height of the subtree rooted at left child of $v$ is greater than the height of the subtree rooted at right child of $v$. 
Now we define an operation called $(Y$-$\Delta)$-transformation which
reduces the number of vertices further. However the mimicking network
remains no more contraction-based. Let $x$ be a degree-3 nonterminal with neighbors $u,v,w$, then we can delete $x$ and add edges $(u,v),(v,w),(w,u)$ with edge cost $\frac{c(u,x)+c(v,x)-c(w,x)}{2}, \frac{c(v,x)+c(w,x)-c(u,x)}{2}, \frac{c(u,x)+c(w,x)-c(v,x)}{2}$ respectively. We call this $(Y$-$\Delta)$-transformation. We consider non-terminals one by one in an in-order traversal of $T$. 
We apply the transformation if a vertex has degree-3 and modify the graph. 
Then we find the next vertex in the in-order traversal of $T$ that has
degree 3 in the modified graph. If there exists such a vertex, we
continue applying the transformation on it. Otherwise we stop to get
the mimicking network $H$. Note that $H$ is a cactus graph i.e., two cycles share at most one vertex in the graph. This is also an outerplanar graph. Assume $V(H)=n$.
Now we claim that there are at most $\lfloor k/2 \rfloor$ leaves in $H$. Consider the leaves(terminals) in the in-order traversal $v_1, v_2, \cdots v_k$. Pair $(v_i, v_i+1)$ for $i=1, 2, \cdots \lfloor k/2 \rfloor$. We claim that at most one of them is a leaf after completion of  $(Y$-$\Delta)$-transformation. Take the path from $v_i$ to $v_{i+1}$ in $T$. At least one degree-3 nonterminals $v_t$ is on the path such that $(Y$-$\Delta)$-transformation was applied to $v_t$, making one leaf in $T$ to have degree $\ge2$ in $H$. Also note than $v_1$ and $v_2$ both are leaves due to the arrangement.
So $H$ has at most $k/2$ leaves and at least $(n-k)$ nodes of degree 4 or more.
As $(Y$-$\Delta)$-transformation keeps number of edges same. $H$ still has at most $(2k-3)$ edges.
Thus we get, $4(n-k)+2\frac{k}{2}+\frac{k}{2} \le 2(2k-3)$ i.e., $n \le \frac{13k}{8}-\frac{3}{2}$.
\end{proof}

\end{document}